\newcommand*\colvec[1]{
        \global\colveccount#1
        \begin{pmatrix}
        \colvecnext
}
\def\colvecnext#1{
        #1
        \global\advance\colveccount-1
        \ifnum\colveccount>0
                \\
                \expandafter\colvecnext
        \else
                \end{pmatrix}
        \fi
}
\newcommand{\tr}{\text{\normalfont Tr}}
\newcommand{\ie}{\textit{i.e.}}
\newcommand{\eg}{\textit{e.g.}}
\newtheorem{lemma}{Lemma}
\newtheorem{theo}{Theorem}
\newtheorem{prop}{Proposition}
\theoremstyle{definition}
\newtheorem{example}{Example}
\newtheorem{coro}{Corollary}
\newcommand{\cH}{\mathcal{H}}
\newcommand{\Pos}{\text{Pos}}
\newcommand{\bN}{\textbf{N}}
\newcommand{\bM}{\textbf{M}}
\newcommand{\II}{\mathbb{I}}
\newcommand{\cB}{\mathcal{B}}
\newcommand{\RR}{\mathbb{R}}
\newcommand{\cA}{\mathcal{A}}
\newcommand{\bA}{\textbf{A}}
\newcommand{\bB}{\textbf{B}}
\newcommand{\beq}{\begin{equation}}
\newcommand{\eeq}{\end{equation}}
\newcommand{\ufmgfis}{Departamento de F\'isica, Universidade Federal de Minas Gerais, 31270-901, Belo Horizonte, MG, Brazil}
\newcommand{\ufmgmat}{Departamento de Matem\'atica, Universidade Federal de Minas Gerais, 31270-901, Belo Horizonte, MG, Brazil}
\newcommand{\icfo}{ICFO-Institut de Ciencies Fotoniques, Mediterranean Technology Park, 08860 Castelldefels, Barcelona, Spain}
\newcommand{\unicamp}{Departamento de Matem\'atica Aplicada, IMECC-Unicamp, 13084-970, Campinas, S\~ao Paulo, Brazil}
\newcommand{\iqoqi}{Institute for Quantum Optics and Quantum Information (IQOQI), Austrian Academy of Sciences, Boltzmanngasse 3, A-1090 Vienna, Austria}
\newcommand{\icrea}{ICREA-Instituci\'o Catalana de Recerca i Estudis Avan\c cats, Lluis Companys 23, 08010 Barcelona, Spain}
\begin{document}

\title{Operational framework for quantum measurement simulability} 

\author{Leonardo Guerini}
\email[]{leonardo.guerini@icfo.eu}
\affiliation{\ufmgmat}
\affiliation{\icfo}

\author{Jessica Bavaresco}
\affiliation{\iqoqi}
\affiliation{\ufmgfis}

\author{Marcelo Terra Cunha}
\affiliation{\unicamp}

\author{Antonio Ac\'\i n}
\affiliation{\icfo}
\affiliation{\icrea}


\begin{abstract}
We introduce a framework for simulating quantum measurements based on classical processing of a set of accessible measurements.
Well-known concepts such as joint measurability and projective simulability naturally emerge as particular cases of our framework, but our study also leads to novel results and questions.
First, a generalisation of joint measurability is derived, which yields a hierarchy for the incompatibility of sets of measurements. 
A similar hierarchy is defined based on the number of outcomes necessary to perform a simulation of a given measurement.
This general approach also allows us to identify connections between different kinds of simulability and, in particular, we characterise the qubit measurements that are projective-simulable in terms of joint measurability.
Finally, we discuss how our framework can be interpreted in the context of resource theories.
\end{abstract}


\maketitle

\section{Introduction}\label{intro}

In the last decades much of the research on quantum theory has focused on quantum states, exploring topics such as entanglement theory and state estimation protocols.
Of no less importance, quantum measurements also present a rich collection of properties that still remain to be fully understood, many of them essential to reveal remarkable features of the theory (\eg \ Bell nonlocality~\cite{bell64,brunner14}, uncertainty relations~\cite{heisenberg27, wehner10}), or to achieve optimality in certain tasks (\eg \ quantum state tomography~\cite{renes04}, quantum state discrimination~\cite{wootters89, bergou10}).

Quantum measurements are modelled by positive-operator-valued measures (POVMs).
Two of the most important properties a set of POVMs may present are compatibility and projectiveness.
Measurement compatibility, or joint measurability, is a property that generalises the notion of commutativity. 
A set of measurements is jointly measurable whenever their statistics can be reproduced by post-processing the statistics obtained by a single POVM. 
This property ensures that a set of measurements leads to results that can be classically modelled in certain scenarios~\cite{kruszyski87,teiko08}. 
Joint measurability is intimately connected to EPR steering~\cite{quintino14,uola14}, an important class of quantum correlations that can be used to certify entanglement in a semi-device independent manner~\cite{wiseman07}.

Projective measurements are related to quantum observables and historically played a major role in quantum theory.
Although projective POVMs are favourable for experimental implementations and present a simpler mathematical structure, they are outperformed by non-projective measurements in various tasks~\cite{renes04, bergou10, bennett92, vertesi10, derka98, acin16}.
Naimark's dilation theorem~\cite{naimark40} guarantees that any non-projective POVM can be implemented as a projective measurement on the original system plus an ancilla.
However, the simulation of general POVMs by projective ones in the same dimension, that is, without the help of any ancilla, is a topic that has only recently been considered~\cite{oszmaniec16}.
By fixing the dimension, this characterises those POVMs that could demonstrate advantages over projective ones.
This is the definition of projective simulability we follow here.

In this work we consider the question of measurement simulability: given a set of accessible measurements, which other measurements can be simulated by them when assisted by classical pre- and post-processing? 
We provide a general operational framework to study this problem and see how joint measurability and projective simulability appear as particular cases of it.
Within our study of different forms of measurement simulability, we provide useful tools for identifying how measurements can be more efficiently implemented on an experimental setup, by optimising the number of POVMs to be performed, their number of outcomes, or by restricting them to be projective.

Our framework also provides a common background to understand relations between types of simulability.
From the equivalence of simulability by a single POVM and compatibility, we introduce a generalisation of the concept of joint measurability by increasing the number of simulators, which defines different degrees on the incompatibility of sets of measurements.
We also show that if a given set of measurements shares the same set of simulators and pre-processing step, then they must be jointly measurable. 
This motivates us to study the simulation of POVMs via measurements with less outcomes and we characterise this type of simulability in terms of joint measurability.
This leads to a perhaps surprising connection between joint measurability and projective simulability:
we prove that a qubit POVM $\bA$ can be simulated by projective measurements if and only if it can be jointly measured with its Bloch-antipodal POVM $\bar{\bA}$ (\textit{i.e.} the POVM whose effects $\bar{A}_i$ have Bloch vectors antipodal to the ones of $A_i$).

Finally, we interpret our framework from a resource-theoretical point of view~\cite{vedral97, plenio07, brandao13, ahmadi13, gour08, devicente14}, where non-simulability plays the role of resource.
In this approach, we show that classical processing and white noise robustness are suitable choices of free operations and resource measure, respectively.

\section{Preliminaries}

We start by introducing our notation and mathematical framework.
Let $\cH$ be a finite dimensional Hilbert space and $\Pos(\cH)$ the set of positive semidefinite operators acting on $\cH$.
A \textit{quantum measurement} on $\cH$  corresponds to a positive-operator valued measure (POVM), that is, a tuple $\bA = (A_1, \ldots, A_n)\in \Pos(\cH)^{\times n}$ of positive semidefinite operators satisfying $\sum_a{A_a} = \II$, where each $A_a$ corresponds to outcome $a$, $n$ is the number of outcomes, and $\II$ is the identity operator on $\cH$.
The operators $A_a$ are called the \textit{effects} of $\bA$.
In the case where the effects $A_a$ are projectors, we say that $\bA$ is a \textit{projective measurement}.
Notice that some effects might be null, corresponding to outcomes that never occur.

A measurement $\bA$ can be simulated by a subset of POVMs $\cB = \{\bB^{(j)}\}_j$ if there is a protocol based on classical manipulations of the measurements in $\cB$ that yields the same statistics as $\bA$ when performed on any quantum state,
\beq
\text{Pr}_{\text{prot}}(i|\rho) = \tr(A_i\rho),
\eeq
for any outcome $i$ and any state $\rho$.

Quantum measurements can be classically manipulated in two ways~\cite{haapasalo2012}: as a \textit{pre-processing} (mixing) and as a \textit{post-processing} (relabeling).
Here we restrict ourselves to operations only on the level of the measurements, although pre-processing operations involving the preparation of quantum states could also be defined~\cite{buscemi05}.
Therefore, the most general protocol for simulating $\bA$ with $\cB$ consists in three steps:
\begin{enumerate}
\item[(i)] Choose a measurement $\bB^{(j)}\in\cB$ with probability $p(j|\bA)$;
\item[(ii)] Perform $\bB^{(j)}$;
\item[(iii)] Upon obtaining outcome $i'$, output $i$ according to some probability $q(i|\bA,j,i')$.
\end{enumerate}

In the above protocol, step (i) represents a pre-processing and step (iii) represents a post-processing.
In the latter, the final output $i$ is produced with a probability $q(i|\bA,j,i')$ conditioned on the POVM $\bA$ to be simulated, on the performed measurement $\bB^{(j)}$, and on the obtained outcome $i'$.
This can be understood as a new measurement $\tilde\bB^{(j)}$ given by effects \cite{stupidfootnote}
\beq\label{football}
\tilde B^{(j)}_{i} = \sum_{i'}q(i|\bA,j,i')B^{(j)}_{i'}.
\eeq
Notice that $\tilde \bB^{(j)}$ may have a different number of outcomes than $\bB^{(j)}$ (either more or less).

Step (i) allows for probabilistic mixing of the post-processed POVMs $\tilde \bB^{(j)}$.
Therefore, we say that an $n$-outcome POVM $\bA$ is \textit{$\cB$-simulable} if there are probability distributions $p(\cdot|\bA), q(\cdot|\bA,j,i')$ such that for any state $\rho$,
\begin{eqnarray}\nonumber
\tr(A_i\rho) &=& \text{Pr}_{\text{prot}}(i|\rho)\\
&=& \tr\left(\left[\sum_j{p(j|\bA)}\sum_{i'}{q(i|\bA,j,i')B^{(j)}_{i'}}\right]\rho\right),
\end{eqnarray}
or, equivalently,
\beq\label{princessbubblegum}
A_i = \sum_j{p(j|\bA)}\sum_{i'}{q(i|\bA,j,i')B^{(j)}_{i'}},
\eeq
for $i\in\{1,\ldots,n\}$.
In this case, we say that the particular choice of measurements $\bB^{(j)}$ involved in the above decomposition are the \textit{$\cB$-simulators} of $\bA$.

It is straightforward to see that any trivial POVM $\bA = (a_1\II,\ldots,a_n\II)$ that has effects proportional to the identity can be simulated only with classical post-processing, simply by taking $q(i|j,i')=a_i$, for all $j,i'$, and therefore are simulable by any set $\cB$ of simulators.
This leads us to the study of the robustness of a given POVM $\bA$.

By applying the \textit{depolarising map}
\beq
\Phi_t: A \mapsto tA + (1-t)\tr(A)\frac\II d
\eeq
to each effect of $\bA$, for some $t\in [0,1]$, we obtain a depolarised version of the measurement,
\beq
\Phi_t(\bA) := (\Phi_t(A_1),\ldots,\Phi_t(A_n)).
\eeq
The parameter $t$ is called the \textit{visibility} of $\bA$ in $\Phi_t(\bA)$.
The depolarising map can be physically interpreted as the presence of white noise in the implementation of $\bA$, and therefore its consideration is natural from an experimental point of view.
We will focus on white noise, but other models of noise could be considered and even optimised, such as in the case of generalised robustness\cite{piani15}.

Notice that the completely depolarised version of $\bA$,
\beq
\Phi_0(\bA) = (\tr(A_1)\II/d,\ldots,\tr(A_n)\II/d),
\eeq
is a trivial POVM, and therefore simulable by any set of measurements.
Then we can define the white noise robustness of $\bA$ regarding its simulation by $\cB$ as
\beq
t_\cB^\bA = \max\{t;\ \Phi_t(\bA)\ \text{is $\cB$-simulable}\}.
\eeq

After introducing all the previous concepts, we are now in position to present our results.

\section{Limiting the number of simulators}\label{lumpyspacekingdom}

The main goal of the next sections is to study the $\cB$ simulability of general POVMs, under different sets of simulators, depending on the number or type of measurements, or the number of outcomes.
We start by considering completely general accessible measurements, restricting solely the number of simulators.

\subsection{Simulability by a single measurement}

Perhaps the simplest form of simulation refers to the case where the subset $\cB$ of measurements to which one has access contains a single POVM $\bB$ of $n_B$ outcomes.
In this case, step (i) of the general protocol is trivial, and the only relevant operation is post-processing.
Therefore, the $\cB$-simulable POVMs are the ones described in Eq. (\ref{football}).

When we consider a set of $m$ measurements $\{\bA^{(l)}\}_{l=1}^m$ that are simulable by the same (arbitrary) POVM $\bB$ we recover the usual definition of joint measurability, as already pointed in Ref.~\cite{ali09}.
Indeed, consider that 
\beq
A_i^{(l)} = \sum_{i'}{q(i|l,i')B_{i'}},
\eeq
for all $i,l$ and some post-processings $q(\cdot|l,i'),\ i'\in\{1,\ldots,n_B\}$.
Then define a \textit{joint measurement} $\bM$ by
\beq\label{scorchy}
M_{a_1\ldots a_m} = \sum_{i=1}^{n_B}\prod_{l=1}^{m}{q(a_l|l,i)B_{i}},
\eeq
for $a_l\in\{1,\ldots,n_l\}$, where $n_l$ is the number of outcomes of $\bA^{(l)}$.
Hence
\begin{equation}\label{gunther}
\sum_{r\neq l}\sum_{a_r=1}^{n_r} M_{a_1\ldots(a_l=i)\ldots a_m} = A^{(l)}_i,
\end{equation}
for all $l, i$, and we obtain the usual definition of joint measurability: the set $\{\bA^{(j)}\}$ is \textit{jointly measurable}, or \textit{compatible}, since all POVM elements $A^{(j)}_i$ can be recovered by (deterministically) coarse-graining over the joint measurement $\bM$.
This proves the following lemma.
\begin{lemma}\label{cosmicowl}
A set of POVMs is jointly measurable if and only if it can be simulated by a single measurement.
\end{lemma}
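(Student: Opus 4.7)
The plan is to prove both implications directly, essentially formalising the construction already sketched in Eqs.~(\ref{scorchy})--(\ref{gunther}).

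For the forward direction (single simulator $\Rightarrow$ joint measurability), suppose every $\bA^{(l)}$ in the set $\{\bA^{(l)}\}_{l=1}^m$ is simulable by the same POVM $\bB=(B_1,\ldots,B_{n_B})$ via post-processings $q(\cdot|l,i')$, so that $A_i^{(l)}=\sum_{i'} q(i|l,i')B_{i'}$. I would then take $\bM$ defined by Eq.~(\ref{scorchy}) as the joint measurement candidate and verify three things: (i) each $M_{a_1\ldots a_m}\in\Pos(\cH)$, which follows immediately since it is a non-negative linear combination of the positive operators $B_i$; (ii) the operators sum to the identity, which reduces to swapping the sum over $(a_1,\ldots,a_m)$ with the sum over $i$, using that $\sum_{a_l} q(a_l|l,i)=1$ for each $l$ and $i$, leaving $\sum_i B_i=\II$; and (iii) the correct marginals hold, which is exactly Eq.~(\ref{gunther}), obtained by fixing $a_l=i$ and summing the remaining indices, again collapsing all probabilities $q(a_r|r,i)$ with $r\neq l$ to $1$.

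For the reverse direction (joint measurability $\Rightarrow$ single simulator), suppose $\{\bA^{(l)}\}_{l=1}^m$ admits a joint measurement $\bM$ with effects $M_{\vec a}$ satisfying $A^{(l)}_i=\sum_{\vec a:\, a_l=i} M_{\vec a}$. Then $\bM$ itself plays the role of the single simulator $\bB$, with deterministic post-processing $q(i|l,\vec a)=\delta_{i,a_l}$. Plugging this into Eq.~(\ref{football}) recovers the marginalisation relation, showing that each $\bA^{(l)}$ is $\{\bM\}$-simulable in the sense of the framework just introduced.

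Neither direction presents a genuine obstacle; both are essentially bookkeeping already carried out in the text, so the task is mostly to state them cleanly as two implications and check that the probability normalisations used really do coincide with the conditions imposed on $q$ in the general protocol. The only subtlety worth flagging is that the outcome set of $\bM$ is the Cartesian product $\prod_l\{1,\ldots,n_l\}$, which in general has many more outcomes than any individual $\bA^{(l)}$; this is consistent with the remark following Eq.~(\ref{football}) that the simulated POVM may have fewer outcomes than its simulator, and makes clear that no bound on $n_B$ is imposed by the lemma.
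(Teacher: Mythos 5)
Your proposal is correct and follows essentially the same route as the paper: the forward direction is exactly the construction of the joint measurement $\bM$ in Eq.~(\ref{scorchy}) with marginals as in Eq.~(\ref{gunther}), and the converse is the observation (left implicit in the paper) that coarse-graining the joint POVM is itself a deterministic post-processing, so $\bM$ serves as the single simulator. Your extra checks of positivity and normalisation of $\bM$ are sound and merely make explicit what the paper takes for granted.
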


Joint measurability thus appears as a particular instance of measurement simulability where only one simulator is considered.
The joint measurement $\bM$ derived from $\bB$ simplifies the post-processing at the cost of typically increasing the number of outcomes of the simulator.


If we can simulate a set of POVMs using only one POVM we will say that the set is \textit{single-POVM-simulable}, as an easily generalisable synonymous of jointly measurable. 
By depolarising each POVM in a set of POVMs $\cA=\{\bA^{(j)}\}$ we can define its depolarised version,
\beq
\Phi_t(\cA) := \{\Phi_t(\bA^{(j)})\},
\eeq
and its white noise robustness regarding single-POVM simulability (or joint measurability),
\beq\label{bananaguard}
t_{\text{1-POVM}}^\cA = \max\{t;\ \Phi_t(\cA)\ \text{is single-POVM-simulable}\}.
\eeq

One can efficiently decide on the single-POVM simulability of a given set of measurements via a semidefinite program (SDP), an efficiently solvable class of optimisation problems.
In fact, since the only requirements for the joint measurement are positive semidefinitiveness and the linear contraints in Eq. (\ref{gunther}), this problem can be phrased as a feasibility SDP~\cite{wolf09}.
A simple modification of it can be used to calculate the white noise robustness of such set~\cite{cavalcanti16}.

\subsection{Simulability by many measurements}

The natural next step is now to consider a set of simulators containing two POVMs, $\cB= \{\bB^{(1)},\bB^{(2)}\}$. 
Again we look at sets of POVMs $\cA = \{\bA^{(l)}\}$ that can be simulated by the same simulators, \ie, for every effect $A_i^{(l)}$ we have
\beq\label{simon}
A_i^{(l)} = p(1|l)\sum_{i'}q(i|l,1,i')B^{(1)}_{i'}+p(2|l)\sum_{i'}q(i|l,2,i')B^{(2)}_{i'}.
\eeq
Following Eq. (\ref{scorchy}), using deterministic post-processing, this is equivalent to
\begin{eqnarray}\nonumber
A_i^{(l)} &=& p(1|l)\sum_{r\neq l}\sum_{a_r=1}^{n_r} M^{(1)}_{a_1\ldots(a_l=i)\ldots a_m}\\ &&+ p(2|l)\sum_{r\neq l}\sum_{a_r=1}^{n_r} M^{(2)}_{a_1\ldots(a_l=i)\ldots a_m}.
\end{eqnarray}
Hence, in terms of joint measurability, now we can combine the marginals of two joint measurements $\bM^{(1)},\bM^{(2)}$.

In contrast with the previous case, we were unable to cast the problem of deciding whether a given set of measurements is 2-POVM-simulable as an SDP.
Since the variables are the pre-processing, the simulators, and the post-processing, Eq. (\ref{simon}) represents apparently unavoidable non-linear constraints.

Since single-POVM simulability is equivalent to joint measurability, by increasing the number of simulators in the accessible set $\cB$ we create a hierarchy of simulability protocols where each case strictly contains the previous one and whose first level is joint measurability.
However, we now show that if the POVMs $\{\bA^{(j)}\}$ can be simulated by $\cB$ using always the same weights $p(1|l),p(2|l)$ in Eq. (\ref{simon}) independent of $l$, then this set is jointly measurable, and therefore simulable by a single POVM.
This is a general feature of the framework, valid for any set of simulators $\cB$.

\begin{prop}\label{prismo}
If every measurement in $\{\bA^{(j)}\}$ is $\cB$-simulable with the same pre-processing step, then $\{\bA^{(j)}\}$ is jointly measurable.
\end{prop}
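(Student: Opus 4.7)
The plan is to reduce this to Lemma \ref{cosmicowl} by explicitly exhibiting a single joint POVM whose marginals recover the set $\{\bA^{(l)}\}$. Write the common pre-processing as $p(k|l)=p(k)$, so that the simulation reads
\[
A_i^{(l)} = \sum_{k} p(k)\sum_{i'} q(i|l,k,i')\, B^{(k)}_{i'}.
\]
The first observation is that, for each fixed simulator $\bB^{(k)}$, the family of post-processed POVMs $\tilde{\bB}^{(l,k)}$ with effects $\tilde B^{(l,k)}_{i}=\sum_{i'} q(i|l,k,i')B^{(k)}_{i'}$ is, by construction, simulable by the single measurement $\bB^{(k)}$. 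Lemma~\ref{cosmicowl} therefore yields, for each $k$, a joint POVM $\bM^{(k)}$ with effects $M^{(k)}_{a_1\ldots a_m}$ reproducing the $\tilde{\bB}^{(l,k)}$ via the coarse-graining of Eq.~(\ref{gunther}), namely $\sum_{r\neq l}\sum_{a_r} M^{(k)}_{a_1\ldots(a_l=i)\ldots a_m}=\tilde B^{(l,k)}_i$.

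The crucial step, and the one that uses the $l$-independence of the pre-processing, is to combine these $\bM^{(k)}$ into one joint POVM. I would define
\[
M_{a_1\ldots a_m} := \sum_{k} p(k)\, M^{(k)}_{a_1\ldots a_m}.
\]
Positivity is immediate, and normalization follows from $\sum_{a_1,\ldots,a_m} M^{(k)}_{a_1\ldots a_m}=\II$ together with $\sum_k p(k)=1$. Finally, linearity of coarse-graining gives
\[
\sum_{r\neq l}\sum_{a_r} M_{a_1\ldots(a_l=i)\ldots a_m} = \sum_{k} p(k)\, \tilde B^{(l,k)}_{i} = A^{(l)}_i,
\]
so $\bM$ is a joint measurement for $\{\bA^{(l)}\}$, and compatibility follows.

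The only subtle point, and the reason why the hypothesis is essential, is that the mixing weights $p(k)$ must be pulled outside the $l$-index: if $p(k|l)$ genuinely depended on $l$, the definition of $\bM$ above would become $l$-dependent and one could not merge the $\bM^{(k)}$ into a single POVM. Beyond this, the argument is a direct construction and I do not anticipate any further obstacle; all other steps (positivity, normalization, matching of marginals) are routine linearity checks.
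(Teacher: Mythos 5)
Your proof is correct and follows essentially the same route as the paper: unfolding your definition $M_{a_1\ldots a_m}=\sum_k p(k)M^{(k)}_{a_1\ldots a_m}$ with the $\bM^{(k)}$ built as in Lemma~\ref{cosmicowl} gives exactly the paper's joint measurement $\sum_j p(j)\sum_{i}\prod_l q(a_l|l,j,i)B^{(j)}_i$. Your modular phrasing (apply Lemma~\ref{cosmicowl} per simulator, then mix with the common weights) is just a repackaging of that construction, and your identification of where $l$-independence of $p$ is needed matches the paper's reasoning.
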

\begin{proof}
The proof is analogous to the one of Lemma \ref{cosmicowl}.
If a $\cB$-simulable set shares the same pre-processing, then $p(j|l)=p(j)$ and we can describe its elements by
\beq
A_i^{(l)} = \sum_j p(j) \sum_{i'}{q(i|l,j,i')B^{(j)}_{i'}}.
\eeq
Hence a joint measurement $\bM$ is defined by
\beq
M_{a_1\ldots a_m} = \sum_j p(j) \sum_{i=1}^{n_{j}}\prod_{l=1}^{m}{q(a_l|l,j,i)B_{i}^{(j)}}.
\eeq
\end{proof}
Similarly to Lemma \ref{cosmicowl}, under the conditions of Proposition \ref{prismo} we can exchange many simulators by a single one, generally with a greater number of outcomes.  
In Section \ref{breakfastkingdom} we apply Proposition \ref{prismo} to more specific cases.

Considering simulability with more than one simulator we can refine our notion of incompatibility, as illustrated by the following example.

\begin{example}
Consider the set $\cA=\{\bA^{(x)}, \bA^{(y)}, \bA^{(z)}, \bA^{(\Sigma)}\}$, where $\bA^{(x)},\bA^{(y)},\bA^{(z)}$ are the projective qubit measurements associated to the Pauli observables $\sigma_x,\sigma_y,\sigma_z$ and $\bA^{(\Sigma)}$ is the projective measurement described by $A^{(\Sigma)}_{\pm} = (\II \pm \vec v \cdot \vec \sigma)/2$, with $\vec v  = (1,1,1)/\sqrt 3$.
Now, our goal is to understand for which values of the visibility $t$ the set $\Phi_t(\cA)$ becomes single-, 2- and 3-POVM-simulable. 
Let us start by the latter.

For 3-POVM simulability, a straightforward protocol can be obtained for visibilities in which a pair of POVMs of $\Phi_t(\cA)$ becomes jointly measurable. 
This happens at $t_{\text{PI}} = 0.7420$, where $\bA^{(\Sigma)}$ becomes jointly measurable with any of the other three measurements in the set. 
For visibilities larger than $t_{\text{PI}}$ the set is pairwise incompatible, as there is no pair of POVMs in $\Phi_t(\cA)$ which is jointly measurable. 
However, we next show that this protocol is not optimal for 3-POVM simulability.

Since one of the three-element subsets of $\cA$ is clearly more incompatible than the others (namely, $\{\bA^{(x)},\bA^{(y)},\bA^{(z)}\}$), a better strategy to simulate $\cA$ with 3 simulators is to assign each element of this subset to an exclusive simulator.
This means that for these measurements each pre-processing is deterministic,
\begin{eqnarray}
p(j|\bA^{(w)}) = \delta_{j,1}\delta_{x,w} + \delta_{j,2}\delta_{y,w}+\delta_{j,3}\delta_{z,w},
\end{eqnarray}
where $w=x,y,z$, and each $\bA^{(w)}$ is simulated by a single simulator $\bB^{(j)}$, while $\bA^{(\Sigma)}$ uniformly combines all three simulators,
\beq
p(j|\bA^{(\Sigma)}) = \frac 1 3,\ j=1,2,3.
\eeq

By fixing this pre-processing, we can now write an SDP to calculate the best post-processing steps corresponding to it and the best parameter $t$ such that $\Phi_t(\cA)$ is simulated by this protocol.
With this strategy, we find that the set is 3-POVM-simulable at visibility $t_{\text{3-POVM}} = 0.7746$.
Note that for this value of the visibility we have constructed a particular simulation protocol employing three measurements.
It is in principle conceivable that a better simulation protocol exists, which would imply a larger range for 3-POVM simulation.
Yet this protocol was enough to show a gap with the value required to observe pairwise joint measurability.

At visibility $t_{\text{2-POVM}} = 1/\sqrt 2\approx 0.7071$, $\Phi_{t_{\text{2-POVM}}}(\cA)$ becomes 2-POVM-simulable.
This coincides with the visibility $t_{PC}$ needed to make $\cA$ pairwise compatible, identifying it as a ``hollow tetrahedron'', that is, a set of four incompatible POVMs from which every pair of elements is compatible.
Indeed, since any pair of POVMs of $\cA$ is compatible, we can use the joint measurements $\bM^{(xy)}$ (for depolarised versions of $\bA^{(x)}$ and $\bA^{(y)}$), and $\bM^{(z\Sigma)}$ (for depolarised versions of $\bA^{(z)}$ and $\bA^{(\Sigma)}$) as simulators, each one simulating its corresponding pair.

$\cA$ is triplewise incompatible for visibilities $t\geq t_{TI}=0.6236$.
The set becomes triplewise compatible at visibility $t_{\text{TC}}=1/\sqrt 3 \approx 0.5774$, and, finally, fully compatible when depolarised by a parameter of $t_{\text{1-POVM}}=0.5730$.
Recall that these values $t_{TI}, t_{TC},$ and $t_{\text{1-POVM}}$ are obtained via SDP.
 
A brute force numerical search supports the claim that $t_{\text{2-POVM}}, t_{\text{3-POVM}}$ are the optimal parameters for 2- and 3-POVM simulability of $\cA$, respectively.
On Figure \ref{jamesbaxter} we organise all optimal visibilities for the simulability of $\cA$.

\begin{figure}[h!]
\begin{center}
\includegraphics[width=350pt]{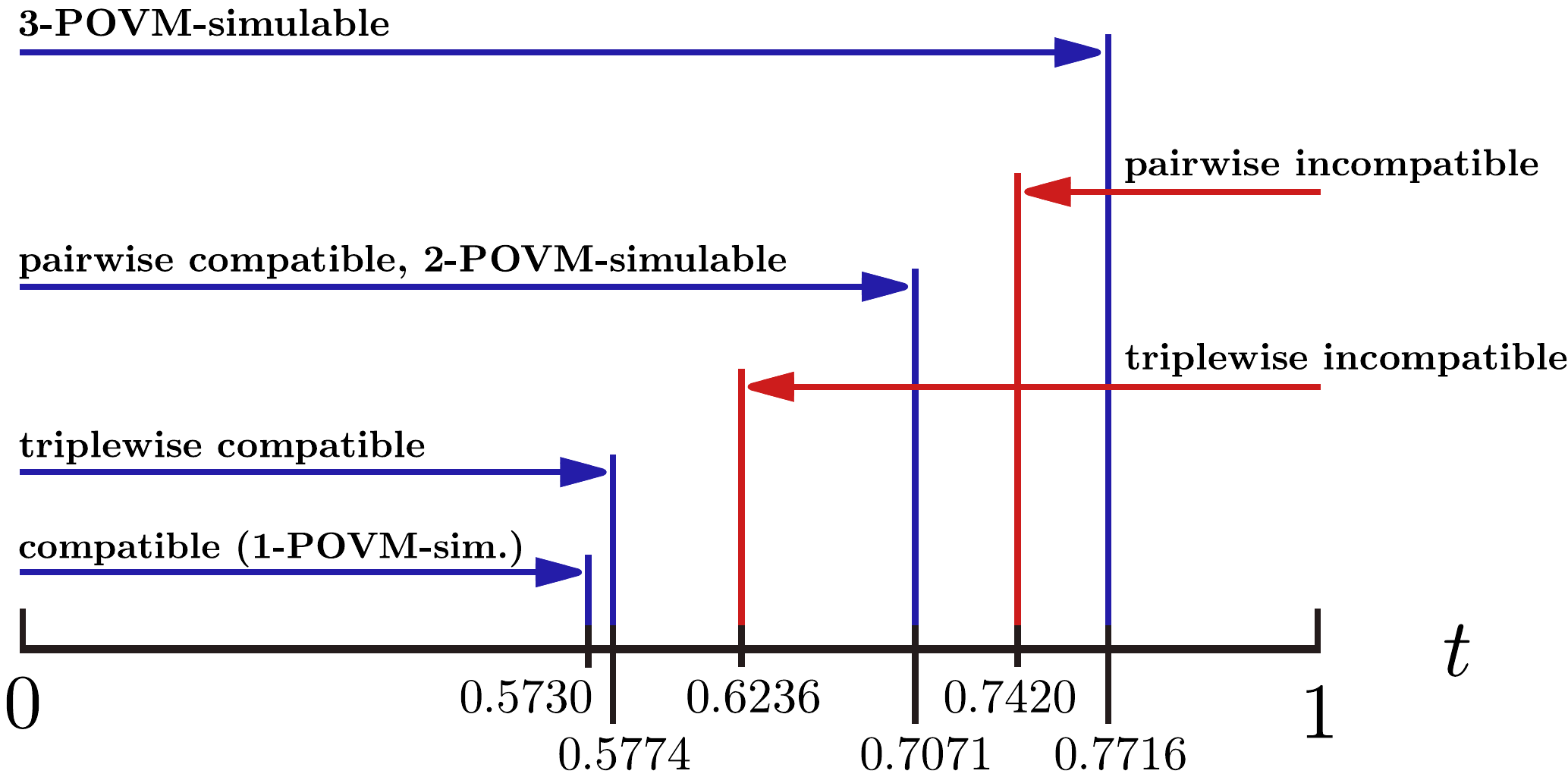}
	\caption{\small{The optimal visibilities for the single-, 2-, and 3-POVM simulability of $\cA=\{\bA^{(x)},\bA^{(y)},\bA^{(z)}, \bA^{(\Sigma)}\}$. Note that the intervals where the set is, say, pairwise compatible and pairwise incompatible are not complementary because these concepts address every possible pair of the set, and different pairs present different degrees of robustness.}}\label{jamesbaxter}
\end{center}
\end{figure}
\end{example}

On the one hand, the above example shows that the number of simulators available yields genuinely different forms of simulability.
On the other, it makes clear that internal compatibility relations between the POVMs of the set provide lower bounds for its $J$-POVM simulability.
For instance, for any set $\{\bA^{(l)}\}$ of $m$ measurements we have
\beq
t_{\text{PI}} \leq t_{\text{(m-1)-POVM}},
\eeq
where $t_{\text{PI}}$ defines open interval of visibilities for which the set is pairwise incompatible, and $t_{\text{(m-1)-POVM}}$ is the critical depolarising parameter for which the set becomes $(m-1)$-POVM-simulable.
Indeed, at $t=t_{PI}$ some pair of POVMs is compatible, say $\bA^{(1)}$ and $\bA^{(2)}$, and thus we can use the simulators $\bB^{(1)}=\bM^{(12)}$ (the joint measurement for $\Phi_t(\bA^{(1)}), \Phi_t(\bA^{(2)})$), $\bB^{(2)}=\bA^{(3)},\ldots,\bB^{(m-1)}=\bA^{(m)}$.
Similarly, we can derive other bounds related to $t_{TI}, t_{PC}$ and so on.

More generally, for a set $\{\bA^{(l)}\}$ of $m$ incompatible POVMs we can consider its robustness regarding simulability with any number $J<m$ of simulators.
For the particular case $J=1$, the noise robustness of joint measurability was extensively studied already~\cite{heinosaari15, cavalcanti16, bavaresco17}, but for $J>1$ this is a new question to be investigated.

\section{Limiting the number of outcomes of the simulators}\label{firekingdom}

Another form of simulability we investigate is by POVMs of less outcomes.
In this case, we do not limit the number of accessible measurements employed for the simulation, but only their number of outcomes.
In other words, now our set of simulators $\cB$ is the set of $k$-outcome POVMs on dimension $d$, and the $\cB$-simulable measurements will be called $k$-outcome-simulable.
This topic arises naturally as another variant of the general simulation problem, and this sort of limitation plays a key role in Bell nonlocality scenarios~\cite{brunner14, kleinmann16, masanes05}.

Note that by applying Lemma \ref{cosmicowl} and Proposition \ref{prismo} one reduces the number of simulators but raises the number of outcomes of the simulators; now we want to improve on the other direction and reduce the number of outcomes, possibly by increasing the number of involved measurements.

We show now that in $k$-outcome simulability the post-processing step can be assumed to be quite simple.
Consider a protocol in which we perform measurement $\bB$ and upon obtaining outcome $i'$, we output $i_0$ with probability $p$ and $i_1$ with probability $(1-p)$.
This is equivalent to the protocol in which with probability $p$ we perform $\bB^{(0)}=\bB$ and always relabel outcome $i'$ by $i_0$, and with probability $(1-p)$ we perform $\bB^{(1)}=\bB$ and always relabel $i'$ by $i_1$.
By doing this for each outcome $i'$, we artificially increase the number of simulators but restrict the post-processing to be deterministic; since we want to simulate an $n$-outcome POVM via $k$-outcome POVMs ($k<n$), we are left with $\frac{n!}{(n-k)!}$ possibilities of post-processing.
Now we notice that post-processing operations that shuffle the order of effects can also be mapped to the pre-processing, in the sense that for each of the $k!$ post-processing permutations on the non-null outcomes we associate a different simulator with permuted effects, which is also an $k$-outcome POVM.
Hence we do not lose generality by considering only the $\frac{n!}{k!(n-k)!}={n \choose k}$ deterministic post-processing strategies that takes $k$-outcome POVMs to $n$-outcome ones while preserving the relative order of effects.

Finally, we can group the simulators that share the same post-processing.
Indeed, imagine that $\bA$ is simulated by the $k$-outcome POVMs $\{\bB^{(j)}=(B^{(j)}_1, \ldots,B^{(j)}_k)\}$ and $\bB^{(1)},\bB^{(2)}$ after being post-processed have the form $\tilde\bB^{(j)}= (B^{(j)}_1, \ldots,B^{(j)}_k,0,\ldots,0),j=1,2$.
Then
\begin{eqnarray}
\bA = p(1)\tilde\bB^{(1)}+p(2)\tilde\bB^{(2)}+\sum_{j>3}{p(j)\tilde\bB^{(j)}} = (p(1)+p(2))\tilde\bB' +\sum_{j>3}{p(j)\bB^{(j)}},
\end{eqnarray}
where the measurement $\tilde\bB'$ is given by 
\beq
\tilde B_i=\frac{p(1)B^{(1)}_i+p(2)B^{(2)}_i}{p(1)+p(2)}
\eeq
also has the form $\tilde \bB = (\tilde B_1, \ldots,\tilde B_k,0,\ldots,0)$.
We conclude that we can consider only one representant of each post-processing class and arrive at the following result, already presented in Ref.~\cite{oszmaniec16}.
\begin{lemma} \label{beemo}
An $n$-outcome POVM $\bA$ is $k$-outcome-simulable if and only if there is a set of at most ${n \choose k}$ POVMs $\{\bB^{(j)}\}$ with at most $k$ non-null effects satisfying
\beq
\bA = \sum_{j=1}^{n \choose k}p_j \bB^{(j)},
\eeq
one for each possible distribution of the $k$ non-null outcomes among the $n$ possibilities.
\end{lemma}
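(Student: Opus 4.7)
My proof plan follows the outline already laid out in the discussion preceding the statement; the task is mostly to organize those observations into a clean "if and only if" argument.

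The easy direction is sufficiency. If $\bA = \sum_{j=1}^{\binom{n}{k}} p_j \bB^{(j)}$ where each $\bB^{(j)}$ has at most $k$ non-null effects placed in one of the $\binom{n}{k}$ possible slot-assignments among the $n$ outcomes, then I can exhibit an explicit simulation: pre-process by picking $j$ with probability $p_j$, measure the $k$-outcome POVM $\bB^{(j)}$ (where I relabel its non-null outcomes to live in the slots dictated by the post-processing class), and output the resulting outcome directly (deterministic, identity-like post-processing). This directly recovers $A_i$ for each $i$.

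For necessity, I would start from the general definition of $k$-outcome simulability, $A_i = \sum_{j} p(j) \sum_{i'} q(i \mid j,i') B^{(j)}_{i'}$, with each $\bB^{(j)}$ a $k$-outcome POVM, and progressively normalize this expression into the canonical form. Step one: replace every stochastic post-processing $q(\cdot \mid j,i')$ by a convex combination of deterministic functions $f_s : \{1,\dots,k\} \to \{1,\dots,n\}$. This is the standard "decompose a stochastic matrix as a convex mixture of its deterministic extreme points" observation, and it merely inflates the index set of simulators while collapsing each post-processing into a deterministic relabeling. Step two: restrict these deterministic relabelings to injections. If a relabeling $f_s$ is not injective, it collapses two $k$-outcome effects into a single $n$-outcome slot; but the resulting post-processed POVM still has at most $k$ non-null effects and can equivalently be realized as an injection from a smaller domain into $\{1,\dots,n\}$, which is itself a sub-case of an injection $\{1,\dots,k\}\hookrightarrow\{1,\dots,n\}$ after padding with null effects.

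Step three is the key bookkeeping step, and it is the one I expect to need the most care: absorbing the order of the injection into the simulator itself. The injections $\{1,\dots,k\}\hookrightarrow\{1,\dots,n\}$ factor as a choice of image (a $k$-subset of $\{1,\dots,n\}$, of which there are $\binom{n}{k}$) composed with a permutation of $\{1,\dots,k\}$. A permutation of outcomes applied to a $k$-outcome POVM $\bB^{(j)}$ yields another $k$-outcome POVM (the same effects, relabeled), so I can replace the pair $(\bB^{(j)}, \text{permutation})$ by the permuted POVM and an identity post-processing on the $k$ outcomes. After this move, every simulator is paired with exactly one of the $\binom{n}{k}$ canonical post-processings (determined purely by which $k$ slots of $\{1,\dots,n\}$ are non-null), and the restriction that each $\bB^{(j)}$ be a $k$-outcome POVM is preserved.

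The final step is the grouping argument already written explicitly in the paragraph containing equation \eqref{princessbubblegum}-style combinations: all simulators sharing the same canonical post-processing class can be merged into a single $k$-outcome POVM by taking a convex combination with weights proportional to their pre-processing probabilities. Since convex combinations of $k$-outcome POVMs with effects supported on the same $k$ slots remain $k$-outcome POVMs with effects on those slots, I end up with exactly $\binom{n}{k}$ simulators, one per slot-assignment, together with the new weights $p_j$, producing the asserted decomposition $\bA = \sum_j p_j \bB^{(j)}$. The main obstacle, as noted, is really only the care needed in step three to verify that absorbing permutations into the simulators genuinely preserves their status as valid $k$-outcome POVMs and does not break the counting.
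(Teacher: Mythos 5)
Your proposal is correct and follows essentially the same route as the paper's own argument (given in the discussion preceding the lemma): decompose the stochastic post-processing into deterministic relabelings, absorb outcome permutations into the simulators so that only the $\binom{n}{k}$ order-preserving embeddings remain, and then merge simulators sharing the same embedding by convex combination. Your explicit treatment of non-injective deterministic relabelings in step two is a small point the paper glosses over, but it does not change the structure of the proof.
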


This Lemma allows one to efficiently decide on the $k$-outcome simulability of a given POVM and to compute the amount of depolarisation the POVM endures before becoming $k$-outcome-simulable by means of SDP~\cite{oszmaniec16}.

\begin{example}
Consider a tetrahedral qubit measurement $\bA^{\text{tetra}}$ given by $A^{\text{tetra}}_i = (\II + \vec v_i \cdot \vec \sigma)/4, \ i\in\{1,\ldots,4\}$, where the unit vectors $\vec v_i \in \RR^3$ form the vertices of a regular tetrahedron.
This 4-outcome POVM is not 3-outcome-simulable, but when depolarised by $t^{\text{tetra}}_\text{3-out} = 2\sqrt 2/3$, we see that the resulting POVM, $\Phi_{t^{\text{tetra}}_\text{3-out}}(\bA^{\text{tetra}})$, can be decomposed into ${4 \choose 3}=4$ trine POVMs, $\bB^{\text{trine},r}, \ r\in\{1,\ldots,4\}$, each one with effects whose Bloch vectors point in the direction of a equilateral triangle on the plane perpendicular to $\vec v_r$ (Figure \ref{jamesII}).
\end{example}

\begin{figure}[h!]
\begin{center}
	\includegraphics[width=350pt]{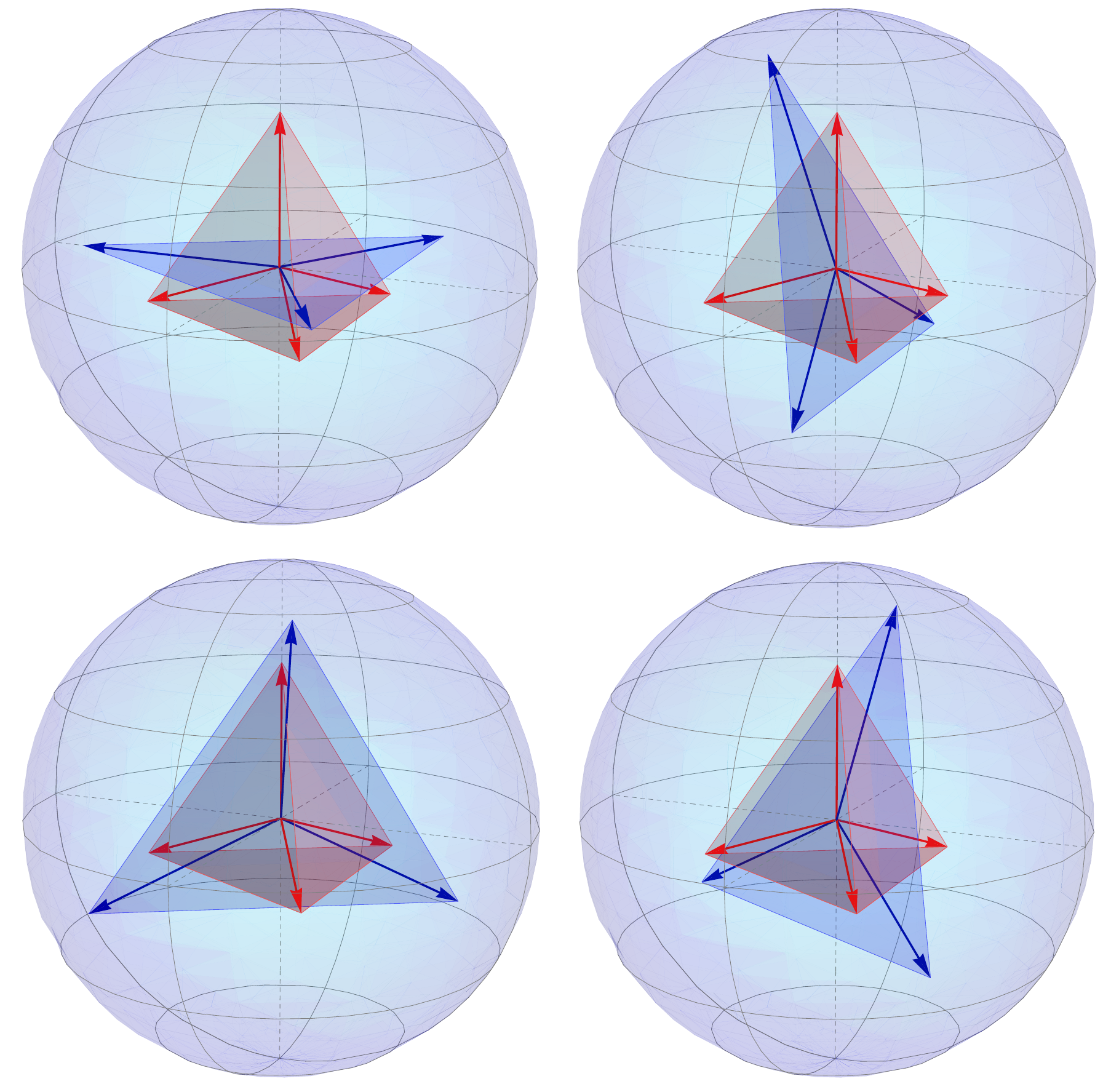}
	\caption{\small{The 4-outcome measurement $\bA^{\text{tetra}}$ becomes 3-outcome-simulable when depolarised by a parameter $t^{\text{tetra}}_\text{3-out} = 2\sqrt 2/3$. Its optimal 3-outcome simulators are regular trines measurements, each one lying on a plane parallel to a facet of the tetrahedron.}}\label{jamesII}
\end{center}
\end{figure}

The trine POVMs $\bB^{\text{trine},r}$ are not 2-outcome-simulable, but this can be achieved by depolarising them by $t^{\text{trine}}_\text{2-out} = \sqrt 3/2$.
It is known that the critical visibility to make $\bA^{\text{tetra}}$ 2-outcome-simulable is $t^{\text{tetra}}_\text{2-out}=\sqrt{2/3}$~\cite{oszmaniec16, hirsch17}, and therefore in this case we have
\beq
t^{\text{tetra}}_\text{3-out}\cdot t^{\text{trine}}_\text{2-out} = t^{\text{tetra}}_\text{2-out}.
\eeq
However, in general one value is only a lower bound for the other,
\begin{equation}
t^{\bA}_\text{$k$-out}\cdot \min\{t^{\bB}_\text{($k-1$)-out} ; \bB\ \text{is a $k$-outcome simulator of $\bA$}\}  \leq t^\bA_\text{$(k-1)$-out}.
\end{equation}

\subsection{$k$-outcome simulability and joint measurability}\label{adventuretime}

For the joint measurability of two POVMs $\{\bA^{(1)},\bA^{(2)}\}$, there is a very visual way of interpreting the joint POVM $\bM$.
If we organise the effects of $\bM$ in an $n\times n$ table, where the effect $M_{a_1a_2}$ occupies position $(a_1,a_2)$, then the marginals correspond to summing over the rows and columns, and Eq. (\ref{gunther}) can be represented by
\begin{eqnarray}\label{orgalorg}
\begin{tabular}{ccc|c}
$M_{11}$ & $\cdots$ & $M_{1n}$ & $A^{(1)}_1$\\
$\vdots$ & $\ddots$ & $\vdots$ & $\vdots$\\ 
$M_{n1}$ & $\cdots$ & $M_{nn}$ & $A^{(1)}_n$\\  
\hline 
$A_1^{(2)}$ & $\cdots$ & $A_n^{(2)}$ &
\end{tabular}.
\end{eqnarray}

Our next result shows that reorganising the effects of simulators in tables representing joint measurements leads to an equivalent condition to $k$-outcome simulability in terms of joint measurability. 

\begin{prop}\label{peppermintbutler} A qudit measurement $\bA$ is $k$-outcome-simulable if and only if there is a joint POVM $\bM$ for the pair $\{\bA, \vec p \cdot \II\}$ with at least $n-k$ null effects in each column $(M_{1j},\ldots,M_{nj})$, where $\vec p \cdot \II = (p_1 \II,\ldots, p_{n \choose k} \II)$.
\end{prop}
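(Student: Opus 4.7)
The plan is to prove Proposition \ref{peppermintbutler} by exhibiting an explicit correspondence between a decomposition of $\bA$ as in Lemma \ref{beemo} and a joint POVM for $\{\bA, \vec p\cdot\II\}$ with the claimed sparsity pattern. The table representation in Eq.~(\ref{orgalorg}) already suggests the right recipe: identify the column index $j$ of the joint POVM with the label of a $k$-outcome simulator, and the row index $i$ with the outcome of $\bA$.

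For the forward direction, I would start from Lemma \ref{beemo}, which gives $\bA = \sum_{j=1}^{\binom{n}{k}} p_j \bB^{(j)}$ with each $\bB^{(j)}$ having at most $k$ non-null effects supported on a distinct subset $S_j \subseteq \{1,\ldots,n\}$ of size $k$. I then define
\beq
M_{ij} := p_j B^{(j)}_i,
\eeq
which is positive semidefinite and satisfies $M_{ij} = 0$ whenever $i\notin S_j$, i.e.\ at least $n-k$ null entries per column. The first marginal reproduces $\bA$ since $\sum_j M_{ij} = \sum_j p_j B^{(j)}_i = A_i$, and the second marginal gives $\sum_i M_{ij} = p_j \sum_i B^{(j)}_i = p_j \II$, so $\bM$ is a joint POVM for $\{\bA, \vec p\cdot\II\}$.

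For the converse, suppose $\bM$ is such a joint POVM. For each $j$ with $p_j > 0$, define $B^{(j)}_i := M_{ij}/p_j$; then $\bB^{(j)}$ is a POVM because $\sum_i B^{(j)}_i = \II$, it has at most $k$ non-null effects by the column-sparsity hypothesis, and $\sum_j p_j \bB^{(j)} = \sum_j \bM_{\cdot j} = \bA$ column-wise. Columns with $p_j = 0$ contribute nothing and may be dropped. To match the format of Lemma \ref{beemo} exactly, I would group all $\bB^{(j)}$ sharing the same non-null support using the merging identity discussed just before Lemma \ref{beemo} (convex combinations of POVMs with identical null patterns are POVMs with the same null pattern), thereby reducing to at most one simulator per subset $S \subseteq \{1,\ldots,n\}$ of size $k$, which is precisely the form guaranteed by Lemma \ref{beemo}.

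I do not expect substantial obstacles: the statement is really a change of variables between the two equivalent languages (convex decomposition vs.\ joint measurement), and positivity, normalisation, and the marginal constraints transfer transparently between them. The only minor subtlety is bookkeeping — matching the $\binom{n}{k}$ possible supports with the $\binom{n}{k}$ column indices and handling columns whose support is strictly smaller than $k$ (which can be padded with extra zeros without changing anything) — but this is mechanical and parallels the argument already used in the text to deduce Lemma \ref{beemo}.
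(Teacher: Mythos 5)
Your proof is correct and follows essentially the same route as the paper: defining $M_{ij}=p_jB^{(j)}_i$ for the forward direction and reading off the columns (rescaled by $p_j$) for the converse. The extra bookkeeping you add about zero-weight columns and merging simulators with identical supports is sound but not a departure from the paper's argument.
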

\begin{proof}
Suppose we have a decomposition of a POVM $\bA$ into $k$-outcome POVMs $\bB^{(j)}$,
\beq\label{leech}
p_1\colvec{3}{B^{(1)}_1}{\vdots}{B^{(1)}_n} + \cdots +
p_m\colvec{3}{B^{(m)}_1}{\vdots}{B^{(m)}_n} = \colvec{3}{A_1}{\vdots}{A_n},
\eeq
where at least $n-k$ effects are null in each $\bB^{(j)}$, and $m = {n \choose k}$ according to Proposition \ref{beemo}.
We can now group the weights $p_j$ with each effect $B^{(j)}_{b_j}$ and organise these effects in a table
\begin{eqnarray}
\begin{tabular}{ccc|c}
$p_1B^{(1)}_1$ & $\cdots$ & $p_mB^{(m)}_1$ & $A_1$\\
$\vdots$ & $\ddots$ & $\vdots$ & $\vdots$\\ 
$p_1B^{(1)}_n$ & $\cdots$ & $p_mB^{(m)}_n$ & $A_n$\\  
\hline
\end{tabular}.
\end{eqnarray}
Due to the normalisation of the $\bB^{(j)}$, summing over each column we obtain $p_j\II$, and analogously to Table (\ref{orgalorg}), we can see the table as a joint POVM for $\bA$ and $\vec p \cdot \II$.

On the other hand, every joint POVM for $\bA$ and $\vec p \cdot \II$ with $n-k$ null effects in each column can generate a decomposition like Eq. (\ref{leech}), where each column represents one of the $k$-outcome simulators.
\end{proof}

Though any POVM is jointly measurable with a trivial POVM $\vec p \cdot \II$ having all effects proportional to the identity, Proposition \ref{peppermintbutler} is a criterion that requires this compatibility to be given in an optimised way where the joint measurement has many null effects, in order to ensure $k$-outcome simulability.

\subsection{The antipodal measurement}\label{breakfastkingdom}

One of the main advantages of studying different forms of simulation on the same framework is that it facilitates the comprehension of connections between them.
In this subsection we illustrate this by presenting relations between $k$-outcome simulability and joint measurability (single-POVM simulability).
Our starting point is to check the consequences of Proposition \ref{prismo} for $k$-outcome simulability.

Consider the simple case of an $n$-outcome POVM $\bA$ which is 2-outcome-simulable.
Then, according to Lemma \ref{beemo}, there are ${n \choose 2}$ convex weights $(p_{ij})$ and dichotomic POVMs $\bB^{(ij)}=(B_{ij},\II - B_{ij})$, $(i,j) \in \{(1,2),\ldots,(n-1,n)\}$, that can be embedded in the set of $n$-outcome POVMs via post-processing, such that $B_{ij}$ takes place on the $i$-th entry of the tuple and $\II-B_{ij}$ on the $j$-th.
Thus we can write
\begin{align}\label{susanstrong}
\colvec{5}{A_1}{A_2}{A_3}{\vdots}{A_n} =
p_{12}\colvec{5}{B_{12}}{\II-B_{12}}{0}{\vdots}{0}+
p_{13}\colvec{5}{B_{13}}{0}{\II-B_{13}}{\vdots}{0}+
\ldots + p_{(n-1)n}\colvec{5}{0}{\vdots}{0}{B_{(n-1)n}}{\II-B_{(n-1)n}}
\end{align}
Notice that this is equivalent to write each effect of $\bA$ as 
\begin{equation}\label{jakethedog}
A_i = \sum_{j;j<i}{p_{ji}(\II-B_{ji})} + \sum_{j;j>i}{p_{ij}B_{ij}},
\end{equation}
and that each effect $A_i$ is the sum of only ${{n-1} \choose {2-1}}=n-1$ non-null operators $p_{ij}B_{ij}$ or $p_{ji}(\II -B_{ji})$.

According to Proposition \ref{prismo}, if we maintain the same pre-processing $(p_{ij})$ on the right-hand side of Eq. (\ref{susanstrong}) but change the post-processing that embeds the dichotomic measurements, the resulting POVM will be jointly measurable with $\bA$.
Now consider the post-processing of $\bB^{(ij)}$ that takes $B_{ij}$ to the $j$-th position and $\II-B_{ij}$ to the $i$-th position.
This way, we construct another 2-outcome-simulable POVM $\tilde\bA$ given by


\begin{equation}\label{finnthehuman}
\tilde{A}_i = \sum_{j;j<i}{p_{ji}B_{ji}} + \sum_{j;j>i}{p_{ij}(\II-B_{ij})},
\end{equation}
in contrast with Eq. (\ref{jakethedog}).
Proposition $\ref{prismo}$ says that
\begin{equation}\label{treetrunks}
M_{a_1a_2} = 
\begin{cases}
p_{a_1a_2}B_{a_1a_2},\ \ &\text{if}\ \ a_1<a_2 \\
0,\ \ &\text{if}\ \ a_1=a_2  \\
p_{a_1a_2}(\II-B_{a_1a_2}),\ &\text{if}\ \ a_1>a_2
\end{cases}.
\end{equation} 
defines a joint measurement for $\{\bA, \tilde \bA\}$. For example, for $n=4$ this joint measurement reads
\begin{eqnarray}\label{iceking}
\begin{tabular}{cccc|cc}
0 & $p_{12}B_{12}$ & $p_{13}B_{13}$ & $p_{14}B_{14}$ & $A_1$ &\\ 
$p_{12}(\II-B_{12})$ & 0 & $p_{23}B_{23}$ & $p_{24}B_{24}$ & $A_2$ &\\ 
$p_{13}(\II-B_{13})$ & $p_{23}(\II-B_{23})$ & 0 & $p_{34}B_{34}$ & $A_3$&\\ 
$p_{14}(\II-B_{14})$ & $p_{24}(\II-B_{24})$ & $p_{34}(\II-B_{34})$ & 0 & $A_4$ &\\ 
\hline 
$\tilde A_1$ & $\tilde A_2$ & $\tilde A_3$ & $\tilde A_4$ & &
\end{tabular}.
\end{eqnarray}

A drawback in the definition of $\tilde \bA$ is that we cannot construct it directly from $\bA$, since it depends on the simulators $\bB^{(ij)}$ and the pre-processing $(p_{ij})$.
We can avoid this by restricting more the simulation and imposing that the simulators $\bB_{ij}$ are \textit{unbiased} 2-outcome POVMs\cite{busch09}, meaning that each effect has the same parcel of identity when decomposed into a Hermitian operator basis,
\begin{subequations}\label{lumpspaceprincess}
\beq
B_{ij} = \frac12\II + \vec{v}_{ij}\cdot \vec{\lambda}
\eeq
\beq
\label{lumpspaceprincess2} \II - B_{ij} = \frac12\II - \vec{v}_{ij}\cdot \vec{\lambda}
\eeq
\end{subequations}
where $\vec{v}_{ij}\in \RR^{d^2-1}$.
Here, $\vec{\lambda}$ is a vector of $d^2-1$ Hermitian traceless operators that, together with $\II$, form a basis for the real vector space of Hermitian operators in dimension $d$~\cite{bertlmann08} (\eg\ the Pauli matrices for $d=2$, and the Gell-Mann matrices for $d=3$).
We call $\vec{\lambda}$ the \textit{generalised Pauli vector}.

If that is the case and $A_i = a_i\II + \vec{u}_i\cdot\vec\lambda$, then Eq. (\ref{jakethedog}) yields
\begin{eqnarray}
& a_i & = \sum_{j;j<i}{\frac{p_{ji}}{2}} + \sum_{j;j>i}{\frac{p_{ij}}{2}},\\
& \vec{u}_i & = \sum_{j;j<i}{p_{ji}(-\vec{v}_{ji})} + \sum_{j;j>i}{p_{ij}\vec{v}_{ij}},
\end{eqnarray}
and from Eq. (\ref{finnthehuman}) we have that $\tilde{A}_i = a_i\II - \vec{u}_i\cdot\vec\lambda$.
In other words, $\tilde{\bA}$ can be defined directly from $\bA$ by flipping the sign of the generalised Pauli vector of each effect, when the latter is simulable via unbiased dichotomic POVMs.
This motivates the definition of antipodal operator: given an Hermitian operator $A=a\II + \vec{v}\cdot\vec\lambda$, its \textit{antipodal operator} is $\bar{A}=a\II - \vec{v}\cdot\vec\lambda$. 

Since the antipodal POVM $\bar{\bA}$ can be constructed from the simulators of $\bA$ (Eq. (\ref{finnthehuman})), the proof of Proposition \ref{prismo} ensures that $\bar{A}_i \geq 0$.
However, the antipodal of a positive semidefinite operator is not always positive semidefinite, this will generally depend on the eigenvalues of the traceless operator $v\cdot\vec\lambda$. 
An exception is the qubit case, where $d=2$; in this case $\vec\lambda = \vec\sigma$ is the usual vector of Pauli matrices and it holds that $a\II + \vec{v}\cdot\vec\lambda \geq 0$ if and only if $a\geq||\vec{v}||$, which implies that $A=a\II + \vec{v}\cdot\vec\lambda \geq 0$ if and only if $\bar A = a\II - \vec{v}\cdot\vec\lambda \geq 0$.

The above reasoning proves the following particular case of Proposition \ref{prismo}.

\begin{prop}\label{landofooo} If a qudit measurement $\bA$, given by $A_i = a_i\II + \vec{u}_i\cdot\vec\lambda$, is simulable via unbiased 2-outcome POVMs, then the antipodal operators $\bar{A}_i = a_i\II - \vec{u}_i\cdot\vec\lambda$ are positive semidefinite, $\bar \bA$ is a valid POVM, and $\{\bA, \bar \bA\}$ is jointly measurable.
\end{prop}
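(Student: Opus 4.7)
The plan is to assemble the ingredients already developed in the paragraphs leading up to the statement, and observe that the claim follows essentially as a specialization of Proposition \ref{prismo}. First, I would invoke Lemma \ref{beemo} to write the 2-outcome simulation of $\bA$ in the form (\ref{susanstrong}), using the unbiased parametrization (\ref{lumpspaceprincess}) for each dichotomic simulator $\bB^{(ij)}=(B_{ij},\II-B_{ij})$ with $B_{ij}=\tfrac12\II+\vec v_{ij}\cdot\vec\lambda$. Substituting these into Eq. (\ref{jakethedog}) and matching the generalised-Pauli decomposition $A_i = a_i\II+\vec u_i\cdot\vec\lambda$ yields the explicit identifications
\begin{equation*}
a_i = \sum_{j<i}\tfrac{p_{ji}}{2}+\sum_{j>i}\tfrac{p_{ij}}{2}, \qquad
\vec u_i = -\sum_{j<i}p_{ji}\vec v_{ji}+\sum_{j>i}p_{ij}\vec v_{ij}.
\end{equation*}

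Next, I would keep the pre-processing $(p_{ij})$ and the simulators $\bB^{(ij)}$ fixed but replace the post-processing by the alternative one in Eq. (\ref{finnthehuman}), thereby defining a new $n$-outcome POVM $\tilde\bA$. A direct computation using the sign flip in (\ref{lumpspaceprincess2}) gives $\tilde A_i = a_i\II - \vec u_i\cdot\vec\lambda$, so $\tilde\bA=\bar\bA$. The crucial observation is that $\tilde\bA$ arises, by construction, as a convex combination of post-processed POVM effects; hence each $\tilde A_i$ is automatically a sum of positive semidefinite operators, which yields $\bar A_i=\tilde A_i\geq 0$. Together with $\sum_i\bar A_i=\sum_i\tilde A_i=\II$, this establishes that $\bar\bA$ is a bona fide POVM.

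Finally, since $\bA$ and $\bar\bA$ admit the same simulators $\{\bB^{(ij)}\}$ and the same pre-processing weights $(p_{ij})$, the hypothesis of Proposition \ref{prismo} is met, and the joint measurability of $\{\bA,\bar\bA\}$ follows immediately. Moreover, the proof of Proposition \ref{prismo} already produces the explicit joint POVM displayed in Eq. (\ref{treetrunks}) (illustrated for $n=4$ by Eq. (\ref{iceking})), so the compatibility is witnessed by a concrete $\bM$ rather than merely asserted.

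I do not anticipate any real obstacle: the content of the statement is essentially a bookkeeping consequence of the already-established machinery. The only conceptual point worth flagging is \emph{why} the unbiased hypothesis is needed: it is precisely what guarantees that flipping the generalised-Pauli vectors $\vec u_i\mapsto -\vec u_i$ coincides with the alternative post-processing (\ref{finnthehuman}) applied to the \emph{same} simulators, which in turn is what lets Proposition \ref{prismo} deliver both positivity of $\bar\bA$ (for free) and joint measurability with $\bA$ (via the table (\ref{treetrunks})).
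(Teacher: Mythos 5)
Your proposal is correct and follows essentially the same route as the paper: reduce to the canonical form of Lemma \ref{beemo} with unbiased dichotomic simulators, swap the post-processing as in Eq. (\ref{finnthehuman}) to identify $\tilde\bA$ with $\bar\bA$ (positivity and normalisation coming for free from the construction), and conclude joint measurability from Proposition \ref{prismo} with the explicit joint POVM of Eq. (\ref{treetrunks}). This is exactly the ``above reasoning'' the paper cites as its proof.
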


Proposition \ref{landofooo} is an example of the power of Proposition \ref{prismo} that has a clear geometrical interpretation.
For the particular case of qubit measurements, in Section \ref{candykingdom} we are able to show its converse (see Theorem \ref{theenchiridion}).

\section{Limiting the mathematical structure of the simulators}\label{candykingdom}

We now investigate the case where the simulating set $\cB$ is constrained to have only projective POVMs.
This automatically limits the number of outcomes to be at most equal to the dimension of the system ($k\leq d$). 
In this case, a $\cB$-simulable measurement is said to be \textit{projective-simulable}~\cite{oszmaniec16}. 
Apart from their fundamental importance, projective measurements are often much easier to implement, as they do not require any ancilla system.

We recall now the following well-known result (Lemma 2.3 of Ref.~\cite{davies76}), due to the fact that extremal dichotomic POVMs are projective.
\begin{lemma}\label{rootbeerguy}
For any dimension $d$, any 2-outcome POVM is projective-simulable. 
If $d=2$, then 2-outcome simulability and projective simulability are equivalent.
\end{lemma}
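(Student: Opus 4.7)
My plan is to prove the two parts separately, the first by an explicit convex decomposition into projective 2-outcome POVMs, and the second by a dimensional-counting argument combined with the first part.

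For the first claim, I would start with an arbitrary 2-outcome POVM $\bB = (B, \II - B)$ with $0 \leq B \leq \II$ in dimension $d$. Spectrally decompose $B = \sum_{i=1}^d \lambda_i \ketbra{i}{i}$ with eigenvalues ordered $1 \geq \lambda_1 \geq \lambda_2 \geq \cdots \geq \lambda_d \geq 0$. The key construction is the ``layer-cake'' decomposition: define the nested projectors $P_k = \sum_{i=1}^k \ketbra{i}{i}$ for $k = 0, 1, \ldots, d$ (with $P_0 = 0$, $P_d = \II$), and set weights $p_0 = 1 - \lambda_1$, $p_k = \lambda_k - \lambda_{k+1}$ for $1 \leq k \leq d-1$, and $p_d = \lambda_d$. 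The ordering ensures $p_k \geq 0$, and a telescoping sum gives $\sum_k p_k = 1$ and $\sum_k p_k P_k = B$. Consequently,
\begin{equation}
(B, \II - B) = \sum_{k=0}^{d} p_k (P_k, \II - P_k),
\end{equation}
which exhibits $\bB$ as a convex combination of 2-outcome projective POVMs (the $k=0$ and $k=d$ terms are trivial but still projective). This is precisely a simulation of $\bB$ by projective measurements via pre-processing only, proving projective simulability.

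For the second claim, I would argue the two inclusions. Projective simulability implies 2-outcome simulability in $d=2$ because any projector in a two-dimensional Hilbert space has rank $0$, $1$, or $2$, so every projective POVM on a qubit has at most two non-trivial effects; hence projective simulators are (after possibly padding with null effects) 2-outcome POVMs. For the converse, suppose $\bA$ is 2-outcome-simulable in $d=2$, so $\bA = \sum_j p_j \tilde \bB^{(j)}$ where each $\bB^{(j)}$ is a 2-outcome POVM on $\CC^2$ (embedded via post-processing into a larger outcome set). By the first part, each $\bB^{(j)}$ is itself projective-simulable, i.e.\ a convex combination of projective POVMs. Substituting these decompositions into the expression for $\bA$ and collecting the convex weights yields a simulation of $\bA$ by projective measurements, establishing projective simulability.

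The argument is essentially constructive with no serious obstacle; the one subtle point is verifying that the pre- and post-processings compose properly so that the substitution in the second part genuinely yields a valid $\cB$-simulation in the sense of Eq.~(\ref{princessbubblegum}), which is immediate from the fact that the set of convex mixtures of projective POVMs is closed under further convex combination and post-processing.
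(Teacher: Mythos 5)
Your proof is correct, but it proceeds along a genuinely different route from the paper. The paper does not give an argument at all: it invokes a known result (Lemma 2.3 of Davies, \emph{Quantum theory of open systems}), namely that the extreme points of the convex set of dichotomic POVMs are projective, from which the first claim follows by convexity, and it leaves the $d=2$ equivalence implicit (projectors on a qubit have rank at most $2$, so projective POVMs are dichotomic up to null effects). You instead give a self-contained, constructive proof: the ``layer-cake'' spectral decomposition $B=\sum_k p_k P_k$ with nested spectral projectors $P_k$ and weights given by consecutive eigenvalue gaps is a valid convex decomposition (the ordering guarantees $p_k\geq 0$, and the telescoping sums check out, with the trivial POVMs $(0,\II)$ and $(\II,0)$ still projective under the paper's definition), and your two-inclusion argument for $d=2$ is sound --- the substitution step you flag is exactly the transitivity of simulability that the paper records separately as Proposition \ref{turtleprincess}, so you could also just cite that. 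What the extremality route buys is brevity and a statement that is dimension-independent with no explicit construction needed; what your route buys is an explicit list of projective simulators and weights (the $P_k$ and eigenvalue gaps), which is more informative in practice, e.g.\ as an explicit simulation protocol or as a feasible starting point for the SDPs discussed in the paper, at the cost of a slightly longer argument.
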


In Ref.~\cite{oszmaniec16} characterisations of the set of projective-simulable POVMs were presented in dimension $d=2,3$ that allow one to efficiently decide by SDP whether a fixed POVM is simulable or not.
It was also shown that the tetrahedral qubit POVM $\bM^{\text{tetra}}$ (see section \ref{firekingdom}) is the most robust in terms of projective simulability~\cite{oszmaniec16, hirsch17}.

\subsection{Projective simulability and joint measurability}

In dimension $d=2$, we see that the unbiased 2-outcome measurements in Eq. (\ref{lumpspaceprincess}) are exactly the projective POVMs and their depolarised versions.
In this particular case, where 2-outcome and projective-simulability coincide (Lemma \ref{rootbeerguy}), we can prove the converse of Proposition \ref{landofooo}, which completely characterises the projective-simulable qubit POVMs.

\begin{theo}\label{theenchiridion}
A qubit POVM is projective-simulable if and only if the pair $\{\bA, \bar{\bA}\}$ is jointly measurable, where $\bar{\bA}$ is the antipodal measurement of $\bA$.
\end{theo}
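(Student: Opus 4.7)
The plan is to split the equivalence into the two implications, using the earlier machinery of the paper for one direction and a symmetrization argument for the other.

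\emph{Forward direction.} Suppose $\bA$ is projective-simulable. By Lemma \ref{rootbeerguy}, in dimension $d=2$ projective-simulability coincides with 2-outcome simulability, and since every projective 2-outcome qubit POVM has the unbiased form $\{\tfrac12\II+\vec v\cdot\vec\sigma,\tfrac12\II-\vec v\cdot\vec\sigma\}$, any convex mixture realizing $\bA$ uses unbiased 2-outcome simulators. Proposition \ref{landofooo} then immediately yields joint measurability of $\{\bA,\bar\bA\}$.

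\emph{Converse direction.} Suppose $\{\bA,\bar\bA\}$ is jointly measurable via some joint POVM $\bM$ with $\sum_j M_{ij}=A_i$ and $\sum_i M_{ij}=\bar A_j$. The goal is to invert the construction that produced Table \eqref{iceking}, \emph{i.e.}\ to extract unbiased 2-outcome POVMs $\{B_{ij},\II-B_{ij}\}$ and weights $p_{ij}$ from $\bM$. The key move is a symmetrization: set
\begin{equation}
\tilde M_{ij}\ :=\ \tfrac12\bigl(M_{ij}+\overline{M_{ji}}\bigr).
\end{equation}
Two quick checks: (i) $\tilde M_{ij}\geq 0$, because in the qubit case a positive operator $x\II+\vec v\cdot\vec\sigma$ stays positive under $\vec v\mapsto -\vec v$ (noted in the paragraph preceding Proposition \ref{landofooo}); (ii) $\tilde\bM$ remains a joint measurement of $\{\bA,\bar\bA\}$, using $\overline{A_i}=\bar A_i$, $\overline{\bar A_j}=A_j$, and linearity of the row/column sums. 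By construction $\tilde M_{ij}=\overline{\tilde M_{ji}}$.

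\emph{Reading off the simulators.} From $\tilde M_{ii}=\overline{\tilde M_{ii}}$ the Bloch vector of $\tilde M_{ii}$ must vanish, so $\tilde M_{ii}=a_i\II$ with $a_i\geq 0$. For $i\neq j$, the antipodal symmetry gives $\tilde M_{ij}+\tilde M_{ji}=\tilde M_{ij}+\overline{\tilde M_{ij}}=p_{ij}\II$ for some $p_{ij}\geq 0$; whenever $p_{ij}>0$, $B_{ij}:=\tilde M_{ij}/p_{ij}$ and $\II-B_{ij}=\tilde M_{ji}/p_{ij}$ form an unbiased dichotomic POVM. Expanding the row sum
\begin{equation}
A_i\ =\ a_i\II+\sum_{j>i}p_{ij}B_{ij}+\sum_{j<i}p_{ji}(\II-B_{ji})
\end{equation}
exhibits $\bA$ as a convex mixture of at most 2-outcome POVMs (the terms $a_i\II$ sitting on the $i$-th slot with nulls elsewhere are trivially 2-outcome-simulable). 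By Lemma \ref{rootbeerguy} this makes $\bA$ projective-simulable, completing the proof.

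\emph{Main obstacle.} The non-trivial step is producing, from an arbitrary joint POVM $\bM$, one whose structure directly matches Table \eqref{iceking}; without the symmetrization $M\mapsto \tilde M$, diagonal entries may carry Bloch vectors and off-diagonal pairs may not sum to a multiple of $\II$, blocking the identification of unbiased 2-outcome simulators. Once the symmetrization is in place, positivity of $\tilde M_{ij}$ and the final conversion from 2-outcome to projective simulators both rest squarely on $d=2$, so the proof does not generalize to arbitrary dimension as stated.
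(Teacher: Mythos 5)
Your proof is correct and follows essentially the same route as the paper's: the forward direction is the same appeal to Proposition \ref{landofooo}, and your symmetrization $\tilde M_{ij}=\tfrac12(M_{ij}+\overline{M_{ji}})$ is exactly the paper's construction $N_{ab}=(M_{ab}+\bar M_{ba})/2$, followed by the same extraction of unbiased dichotomic simulators and the final appeal to Lemma \ref{rootbeerguy}. If anything, your explicit treatment of the diagonal effects $\tilde M_{ii}=a_i\II$ as separate trivial terms is slightly cleaner than the paper's handling of the $a=b$ case.
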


\begin{proof} The only if part is a particular case of Proposition \ref{landofooo}, so we need only to show the if part.

Assume that $\bA$ and $\bar{\bA}$ are jointly measurable and $\bM$ is a joint measurement for the pair, with $M_{ab} = m_{ab}\II + \vec{w}_{ab}\cdot \vec{\sigma}$.
Consider now $N_{ab} = (M_{ab} + \bar{M}_{ba})/2$, where $\bar{M}_{ba}$ represents the antipodal operator of $M_{ba}$.
We have that $\bN$ is also a joint POVM for the pair, since
\begin{subequations}
\beq
\sum_b{N_{ab}} = \frac 12\left(\sum_b{M_{ab}} + \sum_b{\bar M_{ba}}\right) = A_a
\eeq
\beq
\sum_a{N_{ab}} = \frac 12\left(\sum_a{M_{ab}} + \sum_a{\bar M_{ba}}\right) = \bar A_b,
\eeq
\end{subequations}
with the feature that symmetric effects sum up to a multiple of the identity, 
\beq\label{ladyrainicorn}
N_{ab} + N_{ba} = (m_{ab} + m_{ba})\II.
\eeq
Thus
\beq
\bA = \sum_{a\leq b}(m_{ab}+m_{ba})\bB^{(ab)},
\eeq
where the 
POVMs $\bB^{(ab)}$ are defined by
\beq
B^{ab}_s = 
\begin{cases}
N_{ab}/(m_{ab}+m_{ba}),\ \ &\text{if}\ \ s=a \\
N_{ba}/(m_{ab}+m_{ba}),\ \ &\text{if}\ \ s=b \\
0,\ \ &\text{otherwise}  \\
\end{cases},
\eeq
and therefore can be interpreted as 2-outcome measurements embedded in the space of $n$-outcome POVMs.
The normalization of $\bN$ implies that $\sum_{a, b}m_{ab}=1$, which ensures that the decomposition is convex. 
Finally, since every 2-outcome measurement is projective-simulable (Lemma \ref{rootbeerguy}), we conclude that $\bA$ is projective-simulable.

\end{proof}

As a corollary, we see that the pair of antipodal tetrahedrons is the most robust pair of antipodal qubit measurements regarding joint measurability, since $\bM^{\text{tetra}}$ is the most robust qubit POVM regarding projective simulability~\cite{oszmaniec16, hirsch17} (Figure \ref{ohmyglob}).

\begin{coro}
$\{\bM^\text{tetra}, \bar{\bM}^\text{tetra}\}$ is the most robust pair of antipodal qubit measurements regarding joint measurability.
\end{coro}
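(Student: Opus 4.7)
The plan is to reduce the claim directly to Theorem \ref{theenchiridion} by observing that, for qubit POVMs, the robustness of the pair $\{\bA,\bar{\bA}\}$ regarding joint measurability coincides with the robustness of $\bA$ alone regarding projective simulability. Once this is established, the corollary follows immediately from the already quoted fact \cite{oszmaniec16, hirsch17} that $\bM^\text{tetra}$ is the most robust qubit POVM with respect to projective simulability.

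First I would check that the depolarising map $\Phi_t$ commutes with taking antipodals. Writing a qubit effect as $A_i = a_i\II + \vec{u}_i\cdot\vec\sigma$, and using $\tr(A_i)/2 = a_i$, a direct computation gives $\Phi_t(A_i) = a_i\II + t\,\vec{u}_i\cdot\vec\sigma$, whose antipodal is $a_i\II - t\,\vec{u}_i\cdot\vec\sigma = \Phi_t(\bar A_i)$. Hence $\overline{\Phi_t(\bA)} = \Phi_t(\bar{\bA})$, so the depolarised antipodal pair is simply $\{\Phi_t(\bA),\,\overline{\Phi_t(\bA)}\}$.

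Next, I would invoke Theorem \ref{theenchiridion}: the pair $\{\Phi_t(\bA),\,\overline{\Phi_t(\bA)}\}$ is jointly measurable if and only if $\Phi_t(\bA)$ is projective-simulable. Taking the supremum over $t$, the white-noise robustness of $\{\bA,\bar{\bA}\}$ for joint measurability equals the white-noise robustness of $\bA$ for projective simulability.

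Finally, specialising $\bA = \bM^\text{tetra}$ and using that this POVM maximises the critical visibility for projective simulability among all qubit POVMs \cite{oszmaniec16, hirsch17}, we conclude that $\{\bM^\text{tetra},\bar{\bM}^\text{tetra}\}$ maximises the critical visibility for joint measurability among all antipodal pairs of qubit POVMs, which is the content of the corollary. There is no real obstacle beyond confirming the commutation $\overline{\Phi_t(\bA)} = \Phi_t(\bar\bA)$, which is a one-line calculation; the rest is a direct application of Theorem \ref{theenchiridion} and of the known extremality of the tetrahedron.
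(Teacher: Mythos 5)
Your argument is correct and is essentially the paper's own reasoning: the corollary is obtained by applying Theorem \ref{theenchiridion} at each visibility, so that the joint-measurability robustness of the antipodal pair equals the projective-simulability robustness of $\bA$, and then invoking the extremality of $\bM^{\text{tetra}}$ from Refs.~\cite{oszmaniec16,hirsch17}. The only difference is that you spell out the (implicitly used) commutation $\overline{\Phi_t(\bA)}=\Phi_t(\bar\bA)$, which is a worthwhile but minor explicitation.
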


\begin{figure}[h!]
\begin{center}
	\includegraphics[width=350pt]{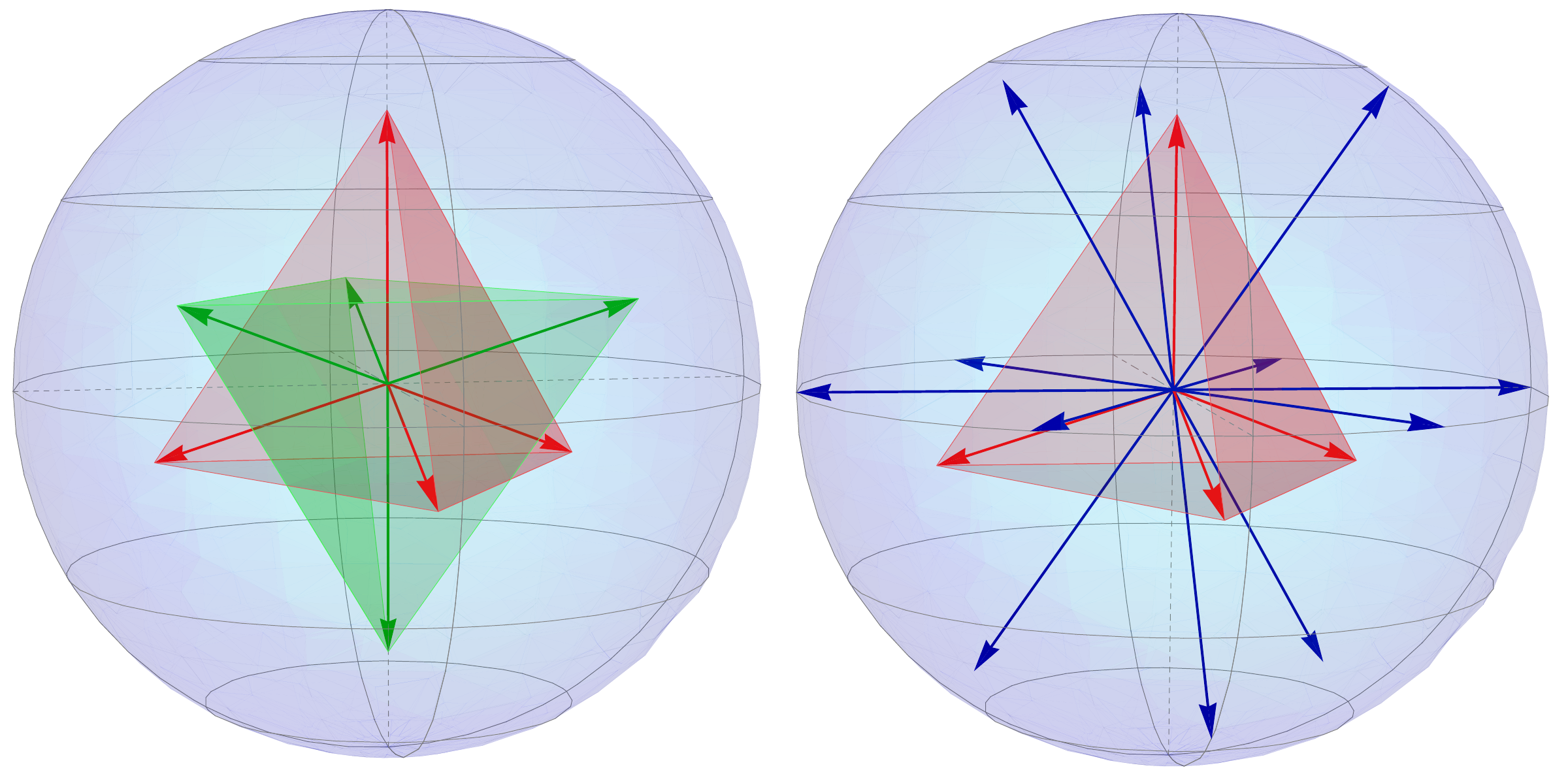}
	\caption{\small{Two antipodal tetrahedral measurements and their optimal projective simulators, which define a joint measurement for the tetrahedral pair. Both joint measurability with the antipodal and projective simulability are achieved at the same critical visibility $t=\sqrt{2/3}$.}}\label{ohmyglob}
\end{center}
\end{figure}

Another consequence of Theorem \ref{theenchiridion} is given by the close connection between joint measurability and EPR steering, namely that a set of POVMs is jointly measurable if and only if it cannot demonstrate steering when applied to any quantum state \cite{quintino14,uola14}.
Hence we see that projective simulability is also connected to EPR steering.

\begin{coro}
A qubit measurement $\bA$ is projective-simulable if and only if the pair $\{\bA, \bar \bA\}$ cannot demonstrate steering when applied to any quantum state of local dimension 2. 
\end{coro}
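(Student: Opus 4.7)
The plan is to obtain this corollary by chaining together two known equivalences: Theorem~\ref{theenchiridion} of this paper, which characterises projective-simulable qubit POVMs as those for which $\{\bA, \bar{\bA}\}$ is jointly measurable, and the Quintino--Uola result~\cite{quintino14,uola14} stating that a set of POVMs is jointly measurable if and only if it cannot demonstrate EPR steering on any quantum state. Since both equivalences are biconditional, their composition immediately gives the desired biconditional.

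More concretely, I would first assume $\bA$ is projective-simulable. By the ``only if'' direction of Theorem~\ref{theenchiridion}, the pair $\{\bA, \bar{\bA}\}$ is jointly measurable. Then, invoking the ``jointly measurable $\Rightarrow$ unsteerable'' direction of~\cite{quintino14,uola14} applied in local dimension $2$, the pair cannot exhibit steering on any bipartite state whose steered subsystem is a qubit. For the converse, if $\{\bA, \bar{\bA}\}$ produces no steering on any state of local dimension $2$, the other direction of~\cite{quintino14,uola14} yields joint measurability of $\{\bA, \bar{\bA}\}$, and then the ``if'' direction of Theorem~\ref{theenchiridion} delivers projective simulability of $\bA$.

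There is essentially no obstacle: the content of the result is entirely contained in the two cited equivalences, and the corollary is simply their composition. The only point that deserves a brief remark is that the Quintino--Uola equivalence is stated for arbitrary dimension, so restricting attention to local dimension $2$ on the steered side is harmless because $\bA$ and $\bar{\bA}$ act on a qubit; no hidden dimensional assumption is being swept under the rug.
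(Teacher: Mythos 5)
Your proposal is correct and follows exactly the paper's route: the corollary is obtained by composing Theorem~\ref{theenchiridion} with the joint-measurability/steering equivalence of Refs.~\cite{quintino14,uola14}, as the paper itself does in the sentence preceding the corollary. Your remark on the local-dimension-2 restriction being harmless is a fine clarification but introduces nothing beyond the paper's argument.
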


\section{Quantum measurement simulability as a resource theory}

The approach to measurement simulability we use here is close to a resource theory in many aspects.
A resource theory is a formal framework to study a given property of a class of objects, which plays the role of \textit{resource}.
The framework is defined by a subset of operations called \textit{free operations}, that has the key feature of not being able to generate the resource.
This means that when a free operation is applied to a \textit{free object}, \ie, an object without the property of interest, the resulting object is also free.
This approach was succesfully used to investigate properties such as entanglement~\cite{vedral97, plenio07}, thermal equilibrium~\cite{brandao13}, asymmetry~\cite{ahmadi13}, reference frames~\cite{gour08}, and nonlocality~\cite{devicente14}.

In our case, for every type of simulators $\cB$ we can define a resource theory where the resource is the non-$\cB$-simulability.
In the case of $J$-POVM simulability (Section \ref{lumpyspacekingdom}), the objects are sets of quantum measurements, the free operations are classical processing, and sets of $J$ measurements are free objects, implying that every simulable set is also free.
Analogously, in the case of $k$-outcome and projective simulability (Sections \ref{firekingdom} and \ref{candykingdom}), the objects are single measurements, and the free operations and objects are again classical processing and simulable measurements, respectively.

To formalise these notions, we prove now the invariance of the set of simulable POVMs by classical processing.
We show that the simulability relation is transitive, namely that if a set of measurements is $\cB$-simulable, then any classical manipulation of it is $\cB$-simulable as well.
This encompasses $J$-POVM simulability of sets of POVMs as a particular case, as well as $k$-outcome and projective simulability of single POVMs.

\begin{prop}\label{turtleprincess}
Let $\cB$ be a subset of measurements.
If a set of measurements $\cA = \{\bA^{(l)}\}$ is $\cB$-simulable, then any set $\tilde\cA$ obtained by classically processing $\cA$ is $\cB$-simulable as well.
\end{prop}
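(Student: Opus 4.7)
The plan is to observe that classical processing composes with classical processing, so a $\cB$-simulation composed with an additional classical processing is itself a $\cB$-simulation. The proof will therefore be a direct substitution followed by a regrouping of the resulting weights into new pre- and post-processing distributions.

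First I would spell out what it means for $\tilde \cA = \{\tilde \bA^{(l')}\}$ to be obtained from $\cA = \{\bA^{(l)}\}$ by classical processing: there exist distributions $p'(\cdot | l')$ over labels and $q'(\cdot | l', l, i)$ over outcomes such that, following the three-step protocol of Section 2 applied with accessible set $\cA$,
\begin{equation}
\tilde A^{(l')}_{i''} \;=\; \sum_l p'(l | l') \sum_i q'(i'' | l', l, i)\, A^{(l)}_i .
\end{equation}
By hypothesis, each $\bA^{(l)}$ admits a $\cB$-simulation $A^{(l)}_i = \sum_j p(j | l) \sum_{i'} q(i | l, j, i')\, B^{(j)}_{i'}$. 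Substituting into the display above and exchanging the order of summation gives
\begin{equation}
\tilde A^{(l')}_{i''} \;=\; \sum_j \sum_{i'} \left[\sum_l p'(l | l')\, p(j | l) \sum_i q'(i'' | l', l, i)\, q(i | l, j, i')\right] B^{(j)}_{i'}.
\end{equation}

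Next I would define the candidate new pre-processing $\tilde p(j | l') := \sum_l p'(l | l')\, p(j | l)$ and, whenever $\tilde p(j | l') > 0$, the candidate new post-processing
\begin{equation}
\tilde q(i'' | l', j, i') \;:=\; \frac{1}{\tilde p(j | l')} \sum_l p'(l | l')\, p(j | l) \sum_i q'(i'' | l', l, i)\, q(i | l, j, i') ;
\end{equation}
if $\tilde p(j | l') = 0$ the post-processing may be chosen arbitrarily since the simulator is never used. A short check confirms normalisation: $\sum_j \tilde p(j | l') = \sum_l p'(l | l') \sum_j p(j | l) = 1$, and summing $\tilde q(i'' | l', j, i')$ over $i''$ collapses $q'$ and then $q$ to $1$, yielding $1$. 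Plugging these back in shows that $\tilde A^{(l')}_{i''} = \sum_j \tilde p(j | l') \sum_{i'} \tilde q(i'' | l', j, i')\, B^{(j)}_{i'}$, establishing $\cB$-simulability of $\tilde \cA$.

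Conceptually there is no real obstacle here; the statement is almost a tautology once one formalises ``classical processing a set''. The only mildly subtle point is the bookkeeping: one must be careful that the new post-processing is well defined even when $\tilde p(j | l')$ vanishes, and that the claim covers both the set-level flavours of simulability ($J$-POVM) and the single-POVM flavours ($k$-outcome, projective), which are subsumed by taking $\cA$ to be a singleton.
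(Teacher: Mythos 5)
Your proof is correct and takes essentially the same route as the paper's: substitute the $\cB$-simulation of each $\bA^{(l)}$ into the classical-processing expression for $\tilde\cA$ and regroup the weights into new pre- and post-processing distributions $\tilde p,\tilde q$. Your bookkeeping is in fact slightly more careful than the paper's (you normalise $\tilde q$ by $\tilde p(j|l')$ and handle the case $\tilde p(j|l')=0$, whereas the paper's displayed definition of $\tilde q$ leaves the label $l$ implicit), but the underlying argument is identical.
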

\begin{proof}
Suppose $\tilde\cA$ contains POVMs $\tilde \bA^{(l)}$, constructed by pre- and post-processing the elements of $\cA$,
\beq\label{breakfastprincess}
\tilde A^{(k)}_{a_k} = \sum_{l}p'(l|k)\sum_{a_l} q'(a_k|k,l,a_l)A^{(l)}_{a_l},
\eeq
for all outcomes $a_k$ and for some probability distributions $p'(\cdot|k),q'(\cdot|k,l,a_l)$, where $k$ runs over the number of elements of $\cA'$, $l$ runs over the number of elements of $\cA$, and $a_l$ runs over the outcomes of $\bA^{(l)}$.
Since we can simulate $\cA$ using $\cB$, there are probability distributions $p(\cdot|l),q(\cdot|l,j,b_j)$, where $j$ labels a POVM $\bB^{(j)}\in\cB$ and $b_j$ its outcomes, satisfying Eq. (\ref{princessbubblegum}).
Thus we can substitute it in the above equation, yielding
\beq
\tilde A^{(k)}_{a_k} = \sum_j\tilde p(j|k)\sum_{b_j}\tilde q(a_k|k,j,b_j)B^{(j)}_{b_j},
\eeq
where
\begin{eqnarray}
\tilde p(\cdot|k) &:=& \sum_{l}p'(l|k)p(\cdot|l) \\
\tilde q(\cdot|k,j,b_j) &:=& \sum_{a_l} q'(\cdot|k,l,a_l) q(a_l|l,j,b_j)
\end{eqnarray}
define pre- and post-processings that simulate $\tilde\cA$ with $\cB$.
\end{proof}

A secondary but still important element of a resource theory is a way of quantifying the resource.
A quantifier function must be monotonic with respect to the free operations, meaning that by performing a free operation one should not be able to increase the measured quantity of resource of the initial object.

Usually the same theory allows many different quantifiers.
We finish this section showing that, for measurement simulability, the white noise robustness of a set of measurements (see Eq.(\ref{bananaguard})) is a suitable measure of non-simulability.
\begin{prop}
The white noise robustness of a set of POVMs regarding $\cB$ simulability is monotonic with respect to classical processings.
\end{prop}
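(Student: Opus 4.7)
The plan is to reduce the monotonicity statement to Proposition \ref{turtleprincess} by showing that the depolarising map commutes with classical processing at the level of POVM effects. Concretely, let $\tilde\cA = \{\tilde\bA^{(k)}\}$ arise from $\cA = \{\bA^{(l)}\}$ via the classical processing of Eq.~(\ref{breakfastprincess}), and write $t^\star := t_\cB^\cA$, so that $\Phi_{t^\star}(\cA)$ is $\cB$-simulable by definition. My goal is to prove $t_\cB^{\tilde\cA} \geq t^\star$, which is exactly the monotonicity requirement for the white noise robustness under free operations.

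First, I would establish the commutation lemma: for every $t \in [0,1]$, the POVM $\Phi_t(\tilde\bA^{(k)})$ coincides with the classical processing (with the same weights $p'$ and $q'$) applied to $\Phi_t(\cA)$. This follows from a one-line computation: applying $\Phi_t$ to
\begin{equation}
\tilde A^{(k)}_{a_k} = \sum_{l}p'(l|k)\sum_{a_l} q'(a_k|k,l,a_l)A^{(l)}_{a_l}
\end{equation}
splits, by linearity, into the sum of $t\,A^{(l)}_{a_l}$ contributions plus an identity term weighted by $\sum_{l}p'(l|k)\sum_{a_l} q'(a_k|k,l,a_l)\tr(A^{(l)}_{a_l})$, which equals $\tr(\tilde A^{(k)}_{a_k})$ by linearity of the trace. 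Hence the two orders of operations agree effect by effect, so $\Phi_t(\tilde\cA)$ is a classical processing of $\Phi_t(\cA)$.

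Second, I would apply Proposition \ref{turtleprincess} at $t = t^\star$: since $\Phi_{t^\star}(\cA)$ is $\cB$-simulable and $\Phi_{t^\star}(\tilde\cA)$ is obtained from it by classical processing, Proposition \ref{turtleprincess} immediately yields that $\Phi_{t^\star}(\tilde\cA)$ is also $\cB$-simulable. By the definition of $t_\cB^{\tilde\cA}$ as a maximum, this forces $t_\cB^{\tilde\cA} \geq t^\star = t_\cB^\cA$, which is precisely the monotonicity claim.

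The only subtle point is verifying that $\Phi_t$ genuinely commutes with the classical processing, because classical processing mixes effects coming from \emph{different} POVMs (with possibly different traces), whereas $\Phi_t$ is defined effect-by-effect. The identity $\tr(\tilde A^{(k)}_{a_k}) = \sum_{l}p'(l|k)\sum_{a_l} q'(a_k|k,l,a_l)\tr(A^{(l)}_{a_l})$ is what makes the identity-valued contributions line up correctly, and it is automatic from the linearity of the trace — so no obstacle actually materialises, and the argument reduces to a clean two-line invocation of the previous proposition.
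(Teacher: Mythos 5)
Your argument is correct and is essentially the paper's own proof: the paper likewise uses the identity $\Phi_t(\tilde A^{(k)}_{a_k}) = \sum_{l}p'(l|k)\sum_{a_l} q'(a_k|k,l,a_l)\Phi_t(A^{(l)}_{a_l})$ (i.e.\ that depolarisation commutes with classical processing, which holds by linearity of the trace, exactly as you note) and then substitutes the simulability decomposition at the critical visibility, which is the same step you carry out by invoking Proposition \ref{turtleprincess}. No gap; the two proofs coincide up to whether the final step is done by explicit substitution or by citing the earlier proposition.
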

\begin{proof}
Suppose $\tilde\cA$ is obtained by classical processing $\cA$.
Following Eq. (\ref{breakfastprincess}), we have
\beq
\Phi_t(\tilde A^{(k)}_{a_k}) = \sum_{l}p'(l|k)\sum_{a_l} q'(a_k|k,l,a_l)\Phi_t(A^{(l)}_{a_l}).
\eeq
This implies that at the critical visibility $t_\cB^\cA$ that makes $\cA$ $\cB$-simulable we can write each effect $\Phi_t(A^{(l)}_{a_l})$ as an appropriate combination of effects of the simulators, and then substitute in the previous equation to find that $\tilde\cA$ is also $\cB$-simulable.
Therefore, $t_\cB^{\tilde\cA} \geq t_\cB^\cA$.
\end{proof}

\section{Discussion}

We presented an operational framework for simulating quantum measurements that comprehends well-known scenarios in the field as particular cases, and identified different connections between them.
This allowed us to describe $k$-outcome simulability 
and projective simulability for qubit POVMs in terms of joint measurability, which appears as a common denominator in this context.


With Theorem \ref{theenchiridion}, we showed an equivalence between projective simulability and joint measurability for qubit measurements.
It remains as an open problem whether there is a similar characterisation for projective simulability in higher dimensions.
One would need to find a proper generalisation for the antipodal POVM that is well-defined for any POVM, in any dimension.
The antipodal operator itself is related to the universal-NOT gate \cite{buzek99}, but a straightforward generalisation is not related to projective simulability. 
In dimension $d=2$, by using the strong connection between joint measurability and EPR steerability~\cite{quintino14,uola14}, as a consequence we also have a relation between projective simulability and EPR steering.

Finally, we also discuss how our approach can be interpreted in the context of resource theories.
Exploring this connection in more detail seems to be a promising direction for future work.

\section*{ACKNOWLEDGMENTS}

The authors thank Marcus Huber, Joseph Bowles, Daniel Cavalcanti, Marco T\'ulio Quintino, and Micha\l{} Oszmaniec for interesting discussions, and Rafael Rabelo and Alexia Salavrakos for helping to improve this manuscript.
This work was supported by the Brazilian agencies CAPES, CNPq, and FAEPEX, Spanish MINECO (QIBEQI FIS2016-80773-P and Severo Ochoa SEV-2015-0522), Fundaci\`o Cellex, Generalitat de Catalunya (SGR875 and CERCA Program), ERC CoG QITBOX, AXA Chair in Quantum Information Science, and the Austrian Science Fund (FWF) through the START project Y879-N27.


\end{document}